%%%%%%%%%%%%%%%%%%%%%%%%%%%%%%%%%%%%%%%%%%%%%%%%%%%%%%%%%%%%%%%%%%%%%%%%%%%%%%%%
%2345678901234567890123456789012345678901234567890123456789012345678901234567890
%        1         2         3         4         5         6         7         8

\documentclass[letterpaper, 10 pt, conference]{ieeeconf}  % Comment this line out
                                                          % if you need a4paper
%\documentclass[a4paper, 10pt, conference]{ieeeconf}      % Use this line for a4
                                                          % paper
\IEEEoverridecommandlockouts                              % This command is only
                      
% Paper-specific macros:
% \newcommand{\name}[num]{definition #1}
\newcommand{\pb}{\pi}              % pb = private belief
 % private belief lower bound (short underbar)
%\newcommand{\pblb}{\underline{\pi}} % private belief lower bound (long underbar)
      % private belief upper bound
 % private belief prior 
     % private belief estimate
\newcommand{\dec}{u}               % decision 
\newcommand{\decset}{\mathcal{\MakeUppercase{\dec}}}  % decision set
\newcommand{\conset}{\mathcal{•}{C}}  % constraint set
\newcommand{\Pb}{\Pi}              % the set of feasible beliefs 
\newcommand{\cost}{c}              % cost function 
              % cost function                       
\newcommand{\nrequality}{N}        % # of equality constraints 
\usepackage{dsfont}
\newcommand{\ones}{\mathds{1}}

\usepackage{amsmath}
\usepackage{amsfonts}
\usepackage{algorithm}
\usepackage{xcolor}
\usepackage{balance} % For the references

\usepackage{enumitem}

\usepackage{amsthm}
\usepackage{graphicx}
\usepackage{algorithm}
\usepackage[noend]{algpseudocode}
\usepackage{subcaption}
\usepackage{epstopdf}
\usepackage{balance}
\usepackage{cite}
\usepackage{epstopdf}

\usepackage{tikz} % To generate the plot from csv
\usepackage{pgfplots}
%Fix x marker
\pgfdeclareplotmark{x}
{%
  \pgfpathmoveto{\pgfqpoint{-.70710678\pgfplotmarksize}{-.70710678\pgfplotmarksize}}%
  \pgfpathlineto{\pgfqpoint{.70710678\pgfplotmarksize}{.70710678\pgfplotmarksize}}%
  \pgfpathmoveto{\pgfqpoint{-.70710678\pgfplotmarksize}{.70710678\pgfplotmarksize}}%
  \pgfpathlineto{\pgfqpoint{.70710678\pgfplotmarksize}{-.70710678\pgfplotmarksize}}%
  \pgfusepathqstroke
}

\newtheorem{theorem}{Theorem}

\newtheorem{problem}{Problem}
\newtheorem{assumption}{Assumption}
\newtheorem{remark}{Remark}

\newtheorem{approach}{Approach}

\bibliographystyle{ieeetr}                  

% needed if you want to % use the \thanks command
\overrideIEEEmargins
% See the \addtolength command later in the file to balance the column lengths
% on the last page of the document

\pdfminorversion=4

% The following packages can be found on http:\\www.ctan.org
%\usepackage{graphics} % for pdf, bitmapped graphics files
%\usepackage{epsfig} % for postscript graphics files
%\usepackage{mathptmx} % assumes new font selection scheme installed
%\usepackage{times} % assumes new font selection scheme installed
%\usepackage{amsmath} % assumes amsmath package installed
%\usepackage{amssymb}  % assumes amsmath package installed

\title{\LARGE \bf
How to Protect Your Privacy? \\ A Framework for Counter-Adversarial Decision Making
}

%\author{ \parbox{3 in}{\centering Huibert Kwakernaak*
%         \thanks{*Use the $\backslash$thanks command to put information here}\\
%         Faculty of Electrical Engineering, Mathematics and Computer Science\\
%         University of Twente\\
%         7500 AE Enschede, The Netherlands\\
%         {\tt\small h.kwakernaak@autsubmit.com}}
%         \hspace*{ 0.5 in}
%         \parbox{3 in}{ \centering Pradeep Misra**
%         \thanks{**The footnote marks may be inserted manually}\\
%        Department of Electrical Engineering \\
%         Wright State University\\
%         Dayton, OH 45435, USA\\
%         {\tt\small pmisra@cs.wright.edu}}
%}

\author{In\^{e}s Louren\c{c}o, Robert Mattila, Cristian R.
    Rojas and Bo Wahlberg% <-this % stops a space
    \thanks{This work supported by the Wallenberg AI, Autonomous Systems and Software Program
        (WASP) and the Swedish Research Council under
        contract 2016-06079. The authors are with the Division of Decision and
        Control Systems, School of Electrical Engineering and Computer Science, KTH Royal
        Institute of Technology, Stockholm, Sweden. E-mails: {\tt\footnotesize \{ineslo, rmattila, crro, bo\}@kth.se} }%
}

\begin{document}

\maketitle
\thispagestyle{empty}
\pagestyle{empty}

\definecolor{orange_c}{rgb}{0.8500, 0.3250,  0.0980}
\definecolor{blue_c}{rgb}{0, 0.4470,  0.7410}
\definecolor{green_c}{rgb}{0.4660, 0.6740, 0.1880}
\definecolor{orange_light}{RGB}{235, 152, 78}

\newcommand{\orangedot}{\protect\tikz[baseline=-0.6ex]\protect\node[orange_c, mark size=1.0mm, thick]{\protect\pgfuseplotmark{o}};}
\newcommand{\bluecross}{\protect\tikz[baseline=-0.6ex]\protect\node[blue_c, mark size=1.0mm,thick]{\protect\pgfuseplotmark{x}};}
\newcommand{\greentriag}{\protect\tikz[baseline=-0.6ex]\protect\node[green_c, mark size=1.0mm, thick]{\protect\pgfuseplotmark{triangle}};}
\newcommand{\blackline}{\protect\tikz[baseline=-0.6ex]\protect\node[black, mark size=1.0mm,thick]{\protect\pgfuseplotmark{-}};}
\newcommand{\orangeline}{\protect\tikz[baseline=-0.6ex]\protect\node[orange_light, mark size=1.0mm,very thick]{\protect\pgfuseplotmark{-}};}
\newcommand{\orangesmall}{\protect\tikz[baseline=-0.6ex]\protect\node[orange_light, fill opacity=0.2, draw opacity=0.5, mark size = 0.6mm]{\protect\pgfuseplotmark{*}};}
\newcommand{\blackstar}{\protect\tikz[baseline=-0.6ex]\protect\node[black, fill opacity=1, draw opacity=0.5, mark size = 0.6mm]{\protect\pgfuseplotmark{*}};}
\newcommand{\bluesmall}{\protect\tikz[baseline=-0.6ex]\protect\node[blue, mark size=0.1mm, thick]{\protect\pgfuseplotmark{*}};}

%%%%%%%%%%%%%%%%%%%%%%%%%%%%%%%%%%%%%%%%%%%%%%%%%%%%%%%%%%%%%%%%%%%%%%%%%%%%%%%%
\begin{abstract}

We consider a counter-adversarial sequential decision-making problem where an agent computes its private belief (posterior distribution) of the current state of the world, by filtering private information. According to its private belief, the agent performs an action, which is observed by an adversarial agent. We have recently shown how the adversarial agent can reconstruct the private belief of the decision-making agent via inverse optimization. The main contribution of this paper is a method to obfuscate the private belief of the agent from the adversary, by performing a suboptimal action. The proposed method optimizes the trade-off between obfuscating the private belief and limiting the increase in cost accrued due to taking a suboptimal action. We propose a probabilistic relaxation to obtain a linear optimization problem for solving the trade-off.
In numerical examples, we show that the proposed methods enable the agent to obfuscate its private belief without compromising its cost budget.% The performance of the proposed methods is illustrated in numerical examples, where we show that the agent is able to obfuscate its private belief without compromising its cost budget.

\end{abstract}

%%%%%%%%%%%%%%%%%%%%%%%%%%%%%%%%%%%%%%%%%%%%%%%%%%%%%%%%%%%%%%%%%%%%%%%%%%%%%%%%
\section{INTRODUCTION}

A Bayesian agent, henceforth referred to as the \textit{agent} or the \textit{decision maker}, gathers information and uses a filter to compute its posterior distribution over the state of nature. We refer to this posterior distribution as the \emph{private belief} of the agent. Based on its private belief, the agent makes a decision that maximizes its expected utility. The corresponding decision is observed by an adversarial agent, whose objective is to reconstruct the private belief of the decision maker.
A schematic representation of this setup is shown in Figure \ref{fig:scheme1}.

In this paper, we study the counter-adversarial problem of protecting the private belief of the decision maker, by allowing it to make suboptimal decisions. The objective of the decision maker is to prevent the adversarial agent (henceforth, the \emph{adversary}) from accurately estimating its private belief, thus preventing the adversary from predicting its future behavior. At the same time, the decision maker has to limit its increase in cost (due to taking a suboptimal action). A schematic representation of the counter-adversarial setup is shown in Figure \ref{fig:scheme2}.

This counter-adversarial decision-making problem has a vast number of applications, ranging from security of cyber-physical systems to protection of investment strategies. 
One potential area of application is social learning \cite{chamley2004rational, krishnamurthy2016partially}. A number of agents acts sequentially, and each computes a private belief by combining private observations with actions performed by previous agents (the public belief). With our framework, an agent could act in a self-centered way and hide its actual private belief from the other agents. This has potential applications in analysing how social and economic herding occurs. 

Another example is portfolio optimization (e.g., \cite{yin2004markowitz}). An investor decides how to invest its capital in a certain set of stocks, based on public information and its expertise. A competing investor wants to make an informed investment decision as well, but does not have the expertise and knowledge of the main investor. By observing the actions of the main investor, the competitor wants to infer the private belief of this investor. We have recently shown in  \cite{cdc} how this can be done using inverse programming. There are several ways to hide investment decisions, such as using dark pools, delay information or fractional investments. In this paper, we study how the investor can keep its private belief obfuscated from the competitor, by slightly altering its portfolio allocation (while limiting the decrease in its risk-adjusted return). 

%Confusing our adversary is a matter of privacy 
%Data privacy has never been as important as it is nowadays. The consequences of private data falling in the wrong hands can be terrible.  %Games of strategic interaction. Each of the players has a preference, and if that preference is revealed the player is in disadvantagerevealed pref Military. Reveal our preferences.
%
%We consider a counter-adversarial setup in which protecting the agent's privacy consists of limiting the amount of information that an ``adversary" is able to gather about it. %Our goal is to maximize our local utility and prevent the adversary from maximizing theirs [cite].  
%In particular, we say that the agent's privacy is compromised if the adversary is able to estimate the agent's private beliefs. The goal of this paper is to find a method to keep the agent's beliefs private, which means obfuscating them from the adversary in such a way that the adversary cannot reconstruct them.  

To summarize, the problem we study in this paper is:
\begin{quote}
    \emph{How should an agent modify its optimal decision in order to not expose its private belief, while limiting its cost increase due to taking a suboptimal decision?}
\end{quote}

%{\color{blue} Motivate with examples that explain what it can be used for. }

The main contributions of this paper are as follows:
\begin{itemize}
	%\item We extend the reconstruction of the set of private beliefs to problems with non-convex cost functions;
	%\item We analyse the reconstructed set of private beliefs;
	\item % We formulate a class of problems concerning decision making under privacy protection. With the goal of protecting the private belief of the decision maker, we define different measures of confusion of the adversary; %, by analysing the set of consistent private beliefs with private data;
We propose a class of counter-adversarial decision-making problems. The goal of protecting the private belief of the decision maker is made formal, by defining a range of qualitative and quantitative measures;

	\item %An optimization problem that represents the trade-off between the increase in cost and the protection of privacy is proposed, based on recent results on inverse decision making;%to determine the optimal strategy the decision maker should follow with respect to the previous measures; %to prevent an adversary from reconstructing its private beliefs; %-- by computing the optimal perturbations to the actions
	The trade-off between obfuscating the private belief and limiting the increase in cost is formulated as an optimization problem;	
	
	\item %A probabilistic sampling-based relaxation is introduced in order to use convex optimization algorithms to efficiently solve the trade-off problem;
	The aforementioned optimization is, in general, not computationally tractable (especially for large decision systems). We derive a probabilistic relaxation that relies only on linear programming;
	
	\item Lastly, the algorithms are validated and evaluated in numerical experiments. % and insight into the robustness of the optimal actions with respect to the protection of the private beliefs is provided.
	%\item We use the ... method to compute the cost function from real data, from which we thus conduct numerical simulations to confirm our theoretical results.
We highlight insights and intuition that can be drawn from the experiments.
\end{itemize}

The paper is organized as follows.
Section \ref{sec:preliminaries} provides formal details on how decision-making is modeled and analyses how, and under which circumstances, the agent's privacy can be compromised. Different metrics to measure the success of the agent in protecting its private belief are presented in Section \ref{sec:confuse}. In Section \ref{sec:probabilistic}, a tractable solution to solving the trade-off cost-privacy is derived. Finally, in Section \ref{sec:results}, the performance of the proposed methods is evaluated in numerical experiments.

%- preserve the privacy
%- protecting the private belief
%- obfuscating the private belief
%%%%%%%%%%%%%%%%%%%%%%%%%%%%%%%%%%%%%%%%%%%%%%%%%%%%%%%%%%%%%%%%%%%%%%%%%%%%%%%%
\begin{figure*}[ht!]

\centering

\tikzset{every picture/.style={line width=0.75pt}} %set default line width to 0.75pt        

\begin{tikzpicture}[x=0.75pt,y=0.75pt,yscale=-1,xscale=1]
%uncomment if require: \path (0,126); %set diagram left start at 0, and has height of 126

%Rounded Rect [id:dp9656093607591125] 
\draw   (154.33,28.27) .. controls (154.33,24.14) and (157.68,20.8) .. (161.8,20.8) -- (193.13,20.8) .. controls (197.25,20.8) and (200.6,24.14) .. (200.6,28.27) -- (200.6,50.68) .. controls (200.6,54.8) and (197.25,58.15) .. (193.13,58.15) -- (161.8,58.15) .. controls (157.68,58.15) and (154.33,54.8) .. (154.33,50.68) -- cycle ;
%Rounded Rect [id:dp24550233278446498] 
\draw  [line width=0.75]  (500.33,62.33) .. controls (500.33,57.64) and (504.14,53.83) .. (508.83,53.83) -- (551.23,53.83) .. controls (555.92,53.83) and (559.73,57.64) .. (559.73,62.33) -- (559.73,87.83) .. controls (559.73,92.53) and (555.92,96.33) .. (551.23,96.33) -- (508.83,96.33) .. controls (504.14,96.33) and (500.33,92.53) .. (500.33,87.83) -- cycle ;
%Rounded Rect [id:dp5851935821195018] 
\draw   (64.53,29.27) .. controls (64.53,25.14) and (67.88,21.8) .. (72,21.8) -- (103.33,21.8) .. controls (107.45,21.8) and (110.8,25.14) .. (110.8,29.27) -- (110.8,51.68) .. controls (110.8,55.8) and (107.45,59.15) .. (103.33,59.15) -- (72,59.15) .. controls (67.88,59.15) and (64.53,55.8) .. (64.53,51.68) -- cycle ;
%Rounded Rect [id:dp386827155780437] 
\draw   (339.73,28.27) .. controls (339.73,24.14) and (343.08,20.8) .. (347.2,20.8) -- (378.53,20.8) .. controls (382.65,20.8) and (386,24.14) .. (386,28.27) -- (386,50.68) .. controls (386,54.8) and (382.65,58.15) .. (378.53,58.15) -- (347.2,58.15) .. controls (343.08,58.15) and (339.73,54.8) .. (339.73,50.68) -- cycle ;
%Straight Lines [id:da8454815233602895] 
\draw    (200.53,39.57) -- (248.57,39.22) ;
\draw [shift={(251.57,39.2)}, rotate = 539.5899999999999] [fill={rgb, 255:red, 0; green, 0; blue, 0 }  ][line width=0.08]  [draw opacity=0] (8.93,-4.29) -- (0,0) -- (8.93,4.29) -- cycle    ;
%Straight Lines [id:da45355091217394183] 
\draw    (296.93,38.57) -- (337.33,38.81) ;
\draw [shift={(340.33,38.83)}, rotate = 180.35] [fill={rgb, 255:red, 0; green, 0; blue, 0 }  ][line width=0.08]  [draw opacity=0] (8.93,-4.29) -- (0,0) -- (8.93,4.29) -- cycle    ;
%Shape: Rectangle [id:dp2599683261147381] 
\draw  [dash pattern={on 5.63pt off 4.5pt}][line width=1.5]  (488.17,47.33) -- (571.23,47.33) -- (571.23,103.67) -- (488.17,103.67) -- cycle ;
%Image [id:dp5257298515601951] 
\draw (472.17,110.58) node  {\includegraphics[width=15pt,height=13.88pt]{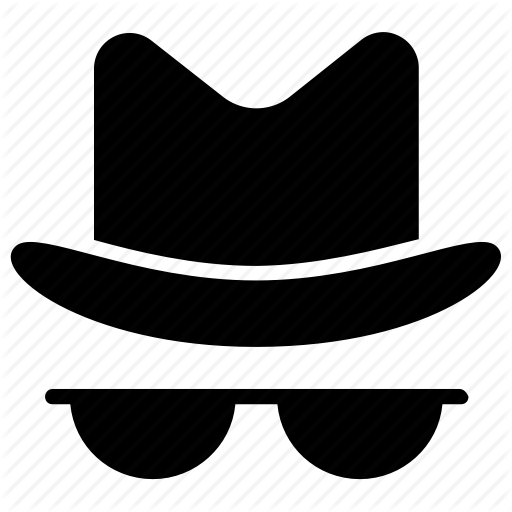}};
%Straight Lines [id:da9808893255757487] 
\draw [color={rgb, 255:red, 255; green, 255; blue, 255 }  ,draw opacity=1 ][line width=2.25]    (392.33,10.97) -- (499,10.97) ;
%Straight Lines [id:da2908363501218103] 
\draw [color={rgb, 255:red, 0; green, 0; blue, 0 }  ,draw opacity=1 ]   (385.33,37.83) -- (522.17,38) ;
\draw [shift={(525.17,38)}, rotate = 180.07] [fill={rgb, 255:red, 0; green, 0; blue, 0 }  ,fill opacity=1 ][line width=0.08]  [draw opacity=0] (8.93,-4.29) -- (0,0) -- (8.93,4.29) -- cycle    ;
%Shape: Rectangle [id:dp2954898563323227] 
\draw  [color={rgb, 255:red, 0; green, 0; blue, 0 }  ,draw opacity=1 ][dash pattern={on 5.63pt off 4.5pt}][line width=1.5]  (206.67,4.33) -- (439.83,4.33) -- (439.83,66.67) -- (206.67,66.67) -- cycle ;
%Straight Lines [id:da008291185932828737] 
\draw    (459.17,76.33) -- (498,76.02) ;
\draw [shift={(501,76)}, rotate = 539.54] [fill={rgb, 255:red, 0; green, 0; blue, 0 }  ][line width=0.08]  [draw opacity=0] (8.93,-4.29) -- (0,0) -- (8.93,4.29) -- cycle    ;
%Straight Lines [id:da28975260603903363] 
\draw    (560.42,77) -- (591,77) ;
\draw [shift={(594,77)}, rotate = 180] [fill={rgb, 255:red, 0; green, 0; blue, 0 }  ][line width=0.08]  [draw opacity=0] (8.93,-4.29) -- (0,0) -- (8.93,4.29) -- cycle    ;
%Straight Lines [id:da9706508952348101] 
\draw    (459,38) -- (459.17,76.33) ;
%Straight Lines [id:da2025414252209521] 
\draw    (111.17,39.67) -- (151.33,39.82) ;
\draw [shift={(154.33,39.83)}, rotate = 180.22] [fill={rgb, 255:red, 0; green, 0; blue, 0 }  ][line width=0.08]  [draw opacity=0] (8.93,-4.29) -- (0,0) -- (8.93,4.29) -- cycle    ;
%Rounded Rect [id:dp6640628962711164] 
\draw   (251.44,27.26) .. controls (251.44,23.14) and (254.78,19.79) .. (258.9,19.79) -- (290.23,19.79) .. controls (294.35,19.79) and (297.7,23.14) .. (297.7,27.26) -- (297.7,49.67) .. controls (297.7,53.8) and (294.35,57.14) .. (290.23,57.14) -- (258.9,57.14) .. controls (254.78,57.14) and (251.44,53.8) .. (251.44,49.67) -- cycle ;
%Shape: Circle [id:dp1254480466448915] 
\draw  [color={rgb, 255:red, 0; green, 0; blue, 0 }  ,draw opacity=1 ][fill={rgb, 255:red, 0; green, 0; blue, 0 }  ,fill opacity=1 ] (457.08,37.92) .. controls (457.08,36.86) and (457.94,36) .. (459,36) .. controls (460.06,36) and (460.92,36.86) .. (460.92,37.92) .. controls (460.92,38.98) and (460.06,39.83) .. (459,39.83) .. controls (457.94,39.83) and (457.08,38.98) .. (457.08,37.92) -- cycle ;

% Text Node
\draw (87.66,40.47) node  [font=\small,color={rgb, 255:red, 0; green, 0; blue, 0 }  ,opacity=1 ] [align=left] {World};
% Text Node
\draw (177.46,39.47) node  [font=\small,color={rgb, 255:red, 0; green, 0; blue, 0 }  ,opacity=1 ] [align=left] {Sensor};
% Text Node
\draw (530.03,75.08) node  [font=\small,color={rgb, 255:red, 0; green, 0; blue, 0 }  ,opacity=1 ] [align=left] {Belief \\estimator};
% Text Node
\draw (302.86,80.07) node  [color={rgb, 255:red, 0; green, 0; blue, 0 }  ,opacity=1 ,xslant=0.18] [align=left] {Original Decision Maker};
% Text Node
\draw (228.05,39.38) node  [font=\footnotesize] [align=left] {Infor-\\mation};
% Text Node
\draw (317.07,30.2) node  [font=\footnotesize] [align=left] {Belief};
% Text Node
\draw (412.33,29.37) node  [font=\footnotesize] [align=left] {Decision,};
% Text Node
\draw (130.07,32.93) node  [font=\footnotesize] [align=left] {State};
% Text Node
\draw (130,16.8) node  [font=\small]  {$x_{k}$};
% Text Node
\draw (222,15.8) node  [font=\small]  {$\mathcal{I}_{k}$};
% Text Node
\draw (320,14.8) node  [font=\small]  {$\pi _{k}$};
% Text Node
\draw (475,83.8) node  [font=\small]  {$u^{*}_{k}$};
% Text Node
\draw (618,78.8) node  [font=\small]  {$\Pi \left( u^{*}_{k}\right)$};
% Text Node
\draw (530.26,114.87) node  [xslant=0.18] [align=left] {Adversary};
% Text Node
\draw (410.33,46.37) node  [font=\footnotesize] [align=left] {Cost};
% Text Node
\draw (463,26.8) node  [font=\small]  {$u^{*}_{k} ,c^{*}_{k}$};
% Text Node
\draw (274.57,38.47) node  [font=\small,color={rgb, 255:red, 0; green, 0; blue, 0 }  ,opacity=1 ] [align=left] {Filter};
% Text Node
\draw (362.86,39.47) node  [font=\small,color={rgb, 255:red, 0; green, 0; blue, 0 }  ,opacity=1 ] [align=left] {Policy};

\end{tikzpicture}

\caption{Adversarial sequential decision-making scheme. A sensor measures the current world state, $x_k$, and turns it into abstract information $\mathcal{I}_k$. By filtering this information, using \eqref{eq:belief}, an agent denoted \textit{Original Decision Maker} (ODM) obtains a private belief $\pi_k$. Acting rationally according to \eqref{eq:optimization_continuous}, the ODM minimizes a cost function $c(x_k,u_k)$ and performs decision $u_k^*$ with cost $c_k^*$. The adversary observes this decision and reconstructs a set of beliefs, $\Pi(u^*_k)$, using the inverse optimization relation \eqref{eq:set_pi}, that includes the ODM's private belief. The extent to which its privacy is compromised is discussed in Section~\ref{sec:quantifying}.}  %that, according to some privacy criteria, compromises the ODM's privacy.} %For example, if it includes its private belief ($\pi_k \in \Pi_k(u_k^*)$). The ODM's privacy is compromised.}
\label{fig:scheme1}
\end{figure*}

\subsection{Related work}

% privacy
The works \cite{chong2019tutorial, lu2019control, nekouei2019information} present overviews of the topic of privacy for systems and control. %privacy for cyber-physical and network control systems. 
Methods from different fields are used to address the trade-off between privacy and system performance, such as information theory \cite{farokhi2017fisher}, hypothesis testing \cite{li2016privacy}, and differential privacy \cite{le2013differentially}. The latter started as a method used for static-database applications, but due to its mathematical rigor and strong guarantees made its way to the privacy of dynamical control systems \cite{cortes2016differential}.%, \cite{han2018privacy}.

% has someone done anything similar? inverse + privacy.

 % inverse problem and filtering 
In this work, we study privacy in the context of inverse problems. Inverse optimal control is an area that focuses on the problem of reconstructing the optimal cost function for a certain system and policy \cite{kalman1964linear}. 
This framework was recently revisited in  \cite{cdc, mattila2017inverse, krishnamurthy2020adversarial}, where inverse filtering is used to infer different characteristics of the system, such as sensor specifications from the sequence of private beliefs.
% cdc
Specifically, in \cite{cdc} it was shown how the set of private beliefs consistent with the agent's actions can be estimated. The current paper is an extension of this work to the counter-adversarial setup, where privacy-preserving measures are taken to obfuscate the set of private beliefs.

We propose a probabilistic framework based on a similar concept to that of randomized actions in Markov Decision Processes, covered in \cite{puterman2014markov}. This idea originated from the introduction of ``mixed strategies" in the field of game theory \cite{mckinsey2003introduction}. For approximate methods to convexify the problem, we focus on Monte Carlo integration \cite{davis2007methods}. This method is particularly useful for integration in high-dimensional spaces, since it has been shown to have an accuracy in terms of the standard deviation of the error independent of the number of dimensions.

\begin{figure*}[ht!]

\centering

\tikzset{every picture/.style={line width=0.75pt}} %set default line width to 0.75pt        

\begin{tikzpicture}[x=0.75pt,y=0.75pt,yscale=-1,xscale=1]
%uncomment if require: \path (0,136); %set diagram left start at 0, and has height of 136

%Straight Lines [id:da5426480553585746] 
\draw [color={rgb, 255:red, 74; green, 144; blue, 226 }  ,draw opacity=1 ]   (601.42,81) -- (632,81) ;
\draw [shift={(635,81)}, rotate = 180] [fill={rgb, 255:red, 74; green, 144; blue, 226 }  ,fill opacity=1 ][line width=0.08]  [draw opacity=0] (8.93,-4.29) -- (0,0) -- (8.93,4.29) -- cycle    ;
%Rounded Rect [id:dp6198036380874754] 
\draw  [color={rgb, 255:red, 74; green, 144; blue, 226 }  ,draw opacity=1 ][line width=1.5]  (395.87,33.35) .. controls (395.87,29.25) and (399.19,25.93) .. (403.28,25.93) -- (461.79,25.93) .. controls (465.88,25.93) and (469.2,29.25) .. (469.2,33.35) -- (469.2,55.59) .. controls (469.2,59.68) and (465.88,63) .. (461.79,63) -- (403.28,63) .. controls (399.19,63) and (395.87,59.68) .. (395.87,55.59) -- cycle ;
%Straight Lines [id:da3768181271995483] 
\draw [color={rgb, 255:red, 74; green, 144; blue, 226 }  ,draw opacity=1 ][line width=1.5]    (469.17,43.33) -- (565.17,43.65) ;
\draw [shift={(569.17,43.67)}, rotate = 180.19] [fill={rgb, 255:red, 74; green, 144; blue, 226 }  ,fill opacity=1 ][line width=0.08]  [draw opacity=0] (11.61,-5.58) -- (0,0) -- (11.61,5.58) -- cycle    ;
%Shape: Rectangle [id:dp17672138854333586] 
\draw  [color={rgb, 255:red, 74; green, 144; blue, 226 }  ,draw opacity=1 ][dash pattern={on 5.63pt off 4.5pt}][line width=1.5]  (368.17,9.33) -- (479.17,9.33) -- (479.17,73) -- (368.17,73) -- cycle ;
%Straight Lines [id:da5925808972569719] 
\draw [color={rgb, 255:red, 255; green, 255; blue, 255 }  ,draw opacity=1 ][line width=3]    (358.17,1.9) -- (360,72.83) ;
%Rounded Rect [id:dp6172237187744549] 
\draw   (119.33,33.27) .. controls (119.33,29.14) and (122.68,25.8) .. (126.8,25.8) -- (158.13,25.8) .. controls (162.25,25.8) and (165.6,29.14) .. (165.6,33.27) -- (165.6,55.68) .. controls (165.6,59.8) and (162.25,63.15) .. (158.13,63.15) -- (126.8,63.15) .. controls (122.68,63.15) and (119.33,59.8) .. (119.33,55.68) -- cycle ;
%Rounded Rect [id:dp1191255273317029] 
\draw   (29.53,34.27) .. controls (29.53,30.14) and (32.88,26.8) .. (37,26.8) -- (68.33,26.8) .. controls (72.45,26.8) and (75.8,30.14) .. (75.8,34.27) -- (75.8,56.68) .. controls (75.8,60.8) and (72.45,64.15) .. (68.33,64.15) -- (37,64.15) .. controls (32.88,64.15) and (29.53,60.8) .. (29.53,56.68) -- cycle ;
%Rounded Rect [id:dp6926260223607221] 
\draw   (215.44,33.26) .. controls (215.44,29.14) and (218.78,25.79) .. (222.9,25.79) -- (254.23,25.79) .. controls (258.35,25.79) and (261.7,29.14) .. (261.7,33.26) -- (261.7,55.67) .. controls (261.7,59.8) and (258.35,63.14) .. (254.23,63.14) -- (222.9,63.14) .. controls (218.78,63.14) and (215.44,59.8) .. (215.44,55.67) -- cycle ;
%Rounded Rect [id:dp11689742787774549] 
\draw   (304.73,33.27) .. controls (304.73,29.14) and (308.08,25.8) .. (312.2,25.8) -- (343.53,25.8) .. controls (347.65,25.8) and (351,29.14) .. (351,33.27) -- (351,55.68) .. controls (351,59.8) and (347.65,63.15) .. (343.53,63.15) -- (312.2,63.15) .. controls (308.08,63.15) and (304.73,59.8) .. (304.73,55.68) -- cycle ;
%Straight Lines [id:da32982098720912223] 
\draw    (165.53,44.57) -- (213.57,44.22) ;
\draw [shift={(216.57,44.2)}, rotate = 539.5899999999999] [fill={rgb, 255:red, 0; green, 0; blue, 0 }  ][line width=0.08]  [draw opacity=0] (8.93,-4.29) -- (0,0) -- (8.93,4.29) -- cycle    ;
%Straight Lines [id:da4918622022547805] 
\draw    (261.93,43.57) -- (302.33,43.81) ;
\draw [shift={(305.33,43.83)}, rotate = 180.35] [fill={rgb, 255:red, 0; green, 0; blue, 0 }  ][line width=0.08]  [draw opacity=0] (8.93,-4.29) -- (0,0) -- (8.93,4.29) -- cycle    ;
%Shape: Rectangle [id:dp7682814670768392] 
\draw  [color={rgb, 255:red, 0; green, 0; blue, 0 }  ,draw opacity=1 ][dash pattern={on 5.63pt off 4.5pt}][line width=1.5]  (170.67,9.33) -- (368.25,9.33) -- (368.25,73) -- (170.67,73) -- cycle ;
%Straight Lines [id:da632748821537046] 
\draw    (76.17,44.67) -- (116.33,44.82) ;
\draw [shift={(119.33,44.83)}, rotate = 180.22] [fill={rgb, 255:red, 0; green, 0; blue, 0 }  ][line width=0.08]  [draw opacity=0] (8.93,-4.29) -- (0,0) -- (8.93,4.29) -- cycle    ;
%Straight Lines [id:da9490958006001669] 
\draw [color={rgb, 255:red, 255; green, 255; blue, 255 }  ,draw opacity=1 ][line width=2.25]    (368.33,10.33) -- (368.25,72) ;
%Straight Lines [id:da7251157775455128] 
\draw    (373.17,43.67) -- (351.6,43.37) ;
%Straight Lines [id:da09962796459059287] 
\draw [color={rgb, 255:red, 74; green, 144; blue, 226 }  ,draw opacity=1 ][line width=1.5]    (373.17,43.67) -- (392.33,43.39) ;
\draw [shift={(396.33,43.33)}, rotate = 539.1800000000001] [fill={rgb, 255:red, 74; green, 144; blue, 226 }  ,fill opacity=1 ][line width=0.08]  [draw opacity=0] (11.61,-5.58) -- (0,0) -- (11.61,5.58) -- cycle    ;
%Rounded Rect [id:dp5204172728438929] 
\draw  [line width=0.75]  (541.33,66.6) .. controls (541.33,61.76) and (545.26,57.83) .. (550.1,57.83) -- (591.96,57.83) .. controls (596.8,57.83) and (600.73,61.76) .. (600.73,66.6) -- (600.73,92.9) .. controls (600.73,97.74) and (596.8,101.67) .. (591.96,101.67) -- (550.1,101.67) .. controls (545.26,101.67) and (541.33,97.74) .. (541.33,92.9) -- cycle ;
%Shape: Rectangle [id:dp863542095288363] 
\draw  [dash pattern={on 5.63pt off 4.5pt}][line width=1.5]  (529.17,51.33) -- (612.23,51.33) -- (612.23,109.67) -- (529.17,109.67) -- cycle ;
%Image [id:dp6765866228959356] 
\draw (515.17,113.92) node  {\includegraphics[width=15pt,height=13.88pt]{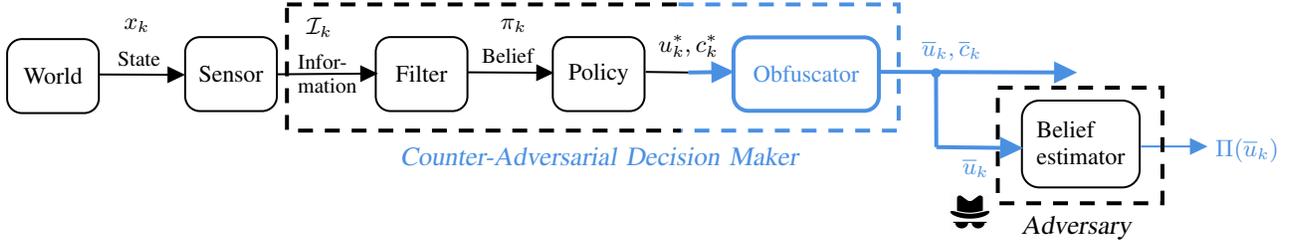}};
%Straight Lines [id:da9623788029340588] 
\draw [color={rgb, 255:red, 74; green, 144; blue, 226 }  ,draw opacity=1 ][line width=1.5]    (498.17,81.33) -- (538,81.79) ;
\draw [shift={(542,81.83)}, rotate = 180.65] [fill={rgb, 255:red, 74; green, 144; blue, 226 }  ,fill opacity=1 ][line width=0.08]  [draw opacity=0] (11.61,-5.58) -- (0,0) -- (11.61,5.58) -- cycle    ;
%Straight Lines [id:da40002323446126464] 
\draw [color={rgb, 255:red, 74; green, 144; blue, 226 }  ,draw opacity=1 ][line width=1.5]    (498,43) -- (498.17,81.33) ;
%Shape: Circle [id:dp5292170262521643] 
\draw  [color={rgb, 255:red, 74; green, 144; blue, 226 }  ,draw opacity=1 ][fill={rgb, 255:red, 74; green, 144; blue, 226 }  ,fill opacity=1 ] (496.08,43.92) .. controls (496.08,42.86) and (496.94,42) .. (498,42) .. controls (499.06,42) and (499.92,42.86) .. (499.92,43.92) .. controls (499.92,44.98) and (499.06,45.83) .. (498,45.83) .. controls (496.94,45.83) and (496.08,44.98) .. (496.08,43.92) -- cycle ;

% Text Node
\draw (52.66,45.47) node  [font=\small,color={rgb, 255:red, 0; green, 0; blue, 0 }  ,opacity=1 ] [align=left] {World};
% Text Node
\draw (142.46,44.47) node  [font=\small,color={rgb, 255:red, 0; green, 0; blue, 0 }  ,opacity=1 ] [align=left] {Sensor};
% Text Node
\draw (238.57,44.47) node  [font=\small,color={rgb, 255:red, 0; green, 0; blue, 0 }  ,opacity=1 ] [align=left] {Filter};
% Text Node
\draw (327.86,44.47) node  [font=\small,color={rgb, 255:red, 0; green, 0; blue, 0 }  ,opacity=1 ] [align=left] {Policy};
% Text Node
\draw (328.86,86.07) node  [color={rgb, 255:red, 74; green, 144; blue, 226 }  ,opacity=1 ,xslant=0.18] [align=left] {\textcolor[rgb]{0.29,0.56,0.89}{Counter-Adversarial Decision Maker}};
% Text Node
\draw (191.05,44.38) node  [font=\footnotesize] [align=left] {Infor-\\mation};
% Text Node
\draw (282.07,35.2) node  [font=\footnotesize] [align=left] {Belief};
% Text Node
\draw (96.07,36.93) node  [font=\footnotesize] [align=left] {State};
% Text Node
\draw (95,19.8) node  [font=\small]  {$x_{k}$};
% Text Node
\draw (187,20.8) node  [font=\small]  {$\mathcal{I}_{k}$};
% Text Node
\draw (285,19.8) node  [font=\small]  {$\pi _{k}$};
% Text Node
\draw (373.17,30.33) node  [font=\small]  {$u^{*}_{k} ,c^{*}_{k}$};
% Text Node
\draw (571.03,79.08) node  [font=\small,color={rgb, 255:red, 0; green, 0; blue, 0 }  ,opacity=1 ] [align=left] {Belief \\estimator};
% Text Node
\draw (655,82.8) node  [font=\small,color={rgb, 255:red, 74; green, 144; blue, 226 }  ,opacity=1 ]  {$\Pi (\overline{u}_{k})$};
% Text Node
\draw (569.26,121.87) node  [xslant=0.18] [align=left] {Adversary};
% Text Node
\draw (432.53,44.47) node  [font=\small,color={rgb, 255:red, 74; green, 144; blue, 226 }  ,opacity=1 ] [align=left] {Obfuscator};
% Text Node
\draw (518,91.8) node  [font=\small,color={rgb, 255:red, 74; green, 144; blue, 226 }  ,opacity=1 ]  {$\mathnormal{\overline{u}_{k}}$};
% Text Node
\draw (506,32.8) node  [font=\small,color={rgb, 255:red, 74; green, 144; blue, 226 }  ,opacity=1 ]  {$\mathnormal{\overline{u}_{k}} ,\mathnormal{\overline{c}_{k}}$};

\end{tikzpicture}

\caption{A \textit{Counter-adversarial Decision Maker} (CDM) uses an \textit{obfuscator} block to transform its decision $u_k^*$ into a suboptimal decision $\bar{u}_k$ with cost $\bar{c}_k$, by optimizing a privacy measure $\Psi(\bar{u}_k)$ defined in \eqref{eq:ap1}. %The adversary sees the decision and estimates a set that, under some privacy criterion, does not compromise the agent's privacy.
The adversary observes the decision (now $\bar{u}_k$) and, again, reconstructs a set of beliefs (now $\Pi(\bar{u}_k)$). The new decision $\bar{u}_k$ is chosen such that the CDM's privacy is not compromised (according to Section \ref{sec:quantifying}). } %For example, the set does not include its private belief ($\pi_k \notin \Pi_k(\bar{u}_k)$). The CDM's privacy is protected. }
\label{fig:scheme2}
\end{figure*}
%%%%%%%%%%%%%%%%%%%%%%%%%%%%%%%%%%%%%%%%%%%%%%%%%%%%%%%%%%%%%%%%%%%%%%%%%%%%%%%%
\section{Sequential decision making framework}
\label{sec:preliminaries}
 
In this section, we define our notation and analyse the framework presented in Figure \ref{fig:scheme1} in two separate components. First, we introduce the model under which sequential decisions are made by the decision maker. Then, we provide the relevant details from previous work on how the adversary estimates the decision maker's private belief. We conclude the section with an application example of this framework.
 
\subsection{Notation}

All vectors are column vectors and inequalities between vectors are considered element-wise. The $i$th element of a vector $v$ is $[v]_i$. A probability density function is denoted as $p(\cdot)$, the vector of ones as $\ones$, and the set of positive real numbers as $\mathbb{R}^+$. The distance between the sets $C$ and $D$ is defined as $\text{dist}(C, D) = \inf\{ \left\lVert x-y \right\rVert | x \in C, y \in D \}$. Throughout this paper, we use the terms \textit{decision} and \textit{action} interchangeably.%, as in Section 8.2 of \cite{boyd2004convex}.

\subsection{How the decision maker acts}
\label{sec:us}

In this section, we analyse the \textit{Original Decision Maker} (ODM) in Figure \ref{fig:scheme1}. %At each timestep, the decision maker collects information from the world and perform an action (Section \ref{sec:us}).
The world, or environment, is described by a sequence of states modeled as a random variable $x_k \in \mathcal{X}$, where $\mathcal{X}$ is the state-space and $k$ represents discrete time. At each time instant $k$, some information $\mathcal{I}_k$ is collected from the environment. In order to make a decision, the agent filters the information $\mathcal{I}_k$, to obtain a probabilistic rating over the different states of nature -- its private belief:
\begin{equation}
 \pb_k(x) = p(\, x_k = x \; | \; \mathcal{I}_k \,),
 \label{eq:belief}
 \end{equation}
where $p(\cdot)$ is a conditional density function.

The agent can act on the world by making a decision $u_k$.
Each decision has an associated cost $c(x_k,u_k)$, that also depends on the state of the world. Assuming that it acts rationally, the agent selects the action that minimizes its expected cost under its current belief $\pi_k(x)$ of the world:
\begin{align}
\label{eq:optimization_continuous}
   \min_{\dec_k \in \decset} \quad    & \mathbb{E}_{x_k} \big\{ \, \cost(x_k, \dec_k) \; | \; \mathcal{I}_k \,  \big\} \\
      \text{s.t.}\quad& {\dec_k} \in \conset, \notag
\end{align}
where $\conset$ is the set of feasible actions. 
To yield a tractable analysis, $\mathcal{X}$ is assumed to be discrete ($\mathcal{X} = \{1,\dots,X\}$) and the agent's actions, $u_k \in \mathcal{U}$ where $\mathcal{U} = \mathbb{R}^U$ is the decision set, are assumed to obey affine constrains. Then, \eqref{eq:optimization_continuous} is equivalent to:
%
%Decision making model of the agent, ``us". We call it our policy.
\begin{align}
\label{eq:forward}
    \min_{\dec_k \in \decset} \quad & \sum_{i=1}^X [\pb_k]_i \cost(i, \dec_k) \\
    \text{s.t.}\quad& A u_k \leq b,  \notag
\end{align}
where $A$ and $b$ define the affine constraints. The resulting resulting optimal decision for the ODM is $u_k^*$, with an associated cost $c^*_k = \mathbb{E}_{x_k} \big\{ \, \cost(x_k, \dec_k^*) \; | \; \mathcal{I}_k \,  \big\} = \sum_{i=1}^X [\pb_k]_i \cost(i, \dec_k^*)$. The decision $u_k^*$ made by the ODM is then publicly available.
%
%By solving this problem, the agent computes an optimal action $\dec_k^* \in \mathcal{U}$ to be made and announced. Here, $\conset$ is the set of feasible actions. 

%{\color{blue} Give examples of what the beliefs can be.}
%\addtolength{\textheight}{-3cm}   % This command serves to balance the column lengths on the last page of the document manually. It shortens the textheight of the last page by a suitable amount. This command does not take effect until the next page so it should come on the page before the last. Make sure that you do not shorten the textheight too much.

%%%%%%%%%%%%%%%%%%%%%%%%%%%%%%%%%%%%%%%%%%%%%%%%%%%%%%%%%%%%%%%%%%%%%%%%%%%%%%%%
\subsection{What the adversary can infer from the agent's actions}
\label{sec:adversary}
% Designing an adeversarial policy

%Focus on what we are trying to do. How we estimate its posterior.
%Resume CDC paper in 1 page.
This section analyses the second component of Figure \ref{fig:scheme1}, the \textit{Adversary} block. 
We say that the privacy of the decision maker is compromised, if, by observing its decision $\dec_k^*$, the adversary is able to estimate the agent's private belief $\pi_k$. It has recently been shown how the private belief can be exposed from the actions. 
In \cite{cdc}, it is shown that if the agent's constraint set $\conset$ is affine and its cost function $c(x_k,u_k)$ is convex and known to the adversary, the latter is able to reconstruct a set of private beliefs, $\Pi(u_k^*)$, consistent with the decision $u_k^*$ made by the agent. This is summarized in the following theorem.
\begin{theorem}[Set of consistent private beliefs $\Pi$, from \cite{cdc}]
The ODM could only have made decision $u_k^*$, if and only if its private belief $\pi_k$ lies in the set $\Pi(u_k^*)$, defined by:
\label{th:set_pi}
 \begin{align}
 \label{eq:set_pi}
        \Pb(u_k^*) =& \\
               &\hspace{-1cm}  \notag \left\{ \pb \in \mathbb{R}^X :  
                 \begin{aligned}
           \exists& \, \lambda  \in \mathcal{R}^{\MakeUppercase{\dec}}, \; \nu \in  \mathcal{R}^{\nrequality} \text{ s.t. } \\
 &  \quad  \pb^T \ones = 1 , \; \pb \geq 0, \;
            \lambda \geq 0, \\
         &   \quad  [\lambda]_j = 0 \text{ if } [\dec_k^*]_j \neq 0, \quad j = 1, \dots, \MakeUppercase{\dec},
            \\[-0.1cm]
       &    \quad  \sum_{i=1}^X [\pb]_i \nabla_\dec \cost(i, \dec_k^*) - \lambda
            + A^T \nu = 0. 
    \end{aligned}   \right\}.
    \end{align}
\end{theorem}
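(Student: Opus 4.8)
The plan is to read off the characterization of minimizers of the forward problem \eqref{eq:forward} from the Karush--Kuhn--Tucker (KKT) conditions. First I would note that, since $\pi_k \geq 0$ and each $c(i,\cdot)$ is convex, the objective $f(u) := \sum_{i=1}^X [\pi_k]_i\, c(i,u)$ is a nonnegative combination of convex functions, hence convex, while the feasible region --- the affine constraints captured by $A$ together with the componentwise nonnegativity of $u$ --- is polyhedral. Because every constraint is affine, the linearity constraint qualification holds at each feasible point, so for this problem the KKT conditions are simultaneously \emph{necessary} and \emph{sufficient} for global optimality; this is the fact that drives the whole proof.

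Next I would form the Lagrangian, with $\nu$ the multipliers associated with the affine constraints and $\lambda \geq 0$ those associated with the nonnegativity constraints on $u$, and record the KKT blocks: stationarity, $\nabla_u f(u_k^*) - \lambda + A^T\nu = 0$ with $\nabla_u f(u_k^*) = \sum_{i=1}^X [\pi_k]_i \nabla_u c(i,u_k^*)$; primal feasibility; dual feasibility $\lambda \geq 0$; and complementary slackness, which forces $[\lambda]_j = 0$ precisely when the $j$-th nonnegativity constraint is inactive, i.e.\ when $[u_k^*]_j \neq 0$. Adjoining the intrinsic requirements that $\pi_k$ be a belief, namely $\pi_k^T\ones = 1$ and $\pi_k \geq 0$, reproduces verbatim the system defining $\Pi(u_k^*)$ in \eqref{eq:set_pi}.

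It then remains to assemble the two implications. For the ``only if'' direction: a rational ODM solving \eqref{eq:forward} could have produced $u_k^*$ only if $u_k^*$ is a global minimizer, and then necessity of KKT furnishes multipliers $\lambda,\nu$ satisfying \eqref{eq:set_pi}, so $\pi_k \in \Pi(u_k^*)$. For the converse: if $\pi_k \in \Pi(u_k^*)$, the KKT system holds for some $\lambda,\nu$, and sufficiency of KKT for the convex program \eqref{eq:forward} certifies that $u_k^*$ is a global minimizer, hence a decision the ODM could rationally have made. The point requiring the most care is the constraint qualification: it is the affinity of all constraints that makes KKT necessary without any Slater-type interior assumption, and the convexity of $f$ --- itself a consequence of $\pi_k \geq 0$ --- that makes KKT sufficient, while differentiability of $c(i,\cdot)$, implicit in the use of $\nabla_u c$, is what lets us write stationarity in gradient form. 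A worthwhile closing remark is that \eqref{eq:set_pi} exhibits $\Pi(u_k^*)$ as the projection onto the $\pi$-coordinates of a polyhedron in $(\pi,\lambda,\nu)$-space, so the set of consistent beliefs is itself polyhedral --- a structural fact the obfuscation construction later exploits.
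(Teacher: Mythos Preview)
Your proposal is correct and follows exactly the approach sketched in the paper: write down the KKT conditions for the convex program \eqref{eq:forward}, invoke the affine constraint qualification so that KKT is both necessary and sufficient, and then regard $\pi_k$ as the unknown to obtain the set \eqref{eq:set_pi}. The paper only gives a one-line outline and defers full details to \cite{cdc}, so your write-up is in fact more complete than what appears here, but the underlying argument is identical.
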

\begin{proof}[Proof (Outline)]
The theorem follows by deriving the Karush-Kahn-Tucker conditions for \eqref{eq:forward} and considering the ODM's private belief $\pi_k$ as an unknown variable. Full details are available in \cite{cdc}.
\end{proof}

\begin{remark} [Estimating cost functions]
Although Theorem \ref{th:set_pi} was derived for the case where the cost function is assumed to be known by the adversary, it can also be applied in cases where it is unknown but can be estimated. %{\color{blue}( convex?)}. 
For example, using \textnormal{revealed preferences} the cost function represents the agent's preferences and is estimated from its choices \cite{afriat1967construction, varian2012revealed}. %In \cite{krishnamurthy2019inverse}, these methods are used to estimate the adversary's utility function in the context of cognitive radars. This idea could be applied at every timestep of our framework, allowing the adversary to always estimate the set of beliefs consistent with the actions of the agent from the estimated utility function.
\end{remark}

\subsection{Example -- Portfolio optimization}
\label{sec:portfolio_allocation}
%Search for real data where the cost function is known/estimated.

We now exemplify of how the two components described can be applied in an investment scenario, summarizing the complete framework from Figure \ref{fig:scheme1}.

A Markowitz-type investor has access to some private information, $\mathcal{I}_k$. Based on this information, it estimates the optimal investment allocation, with respect to maximizing the risk-adjusted return in a regime-switching market scenario \cite{yin2004markowitz}, \cite{boyd2017multi}. %The optimization problem in the setting of problem  that the investor solves at each timestep, is:
The optimization problem \eqref{eq:optimization_continuous} the ODM (i.e., the investor) solves, at each timestep, is the following:%takes the following form. At each time instant, the investor solves:
  \begin{align}
    \min_{u_k \in \mathcal{R}^U} & \quad \mathbb{E}_{x_k}\left\{ \gamma \dec_k^T
    \Sigma_{x_k} \dec_k - {\mu}_{x_k}^T \dec_k \;|\; \mathcal{I}_k \right\} \label{eq:mpt_optimization} \\
    \text{s.t.} \; & \quad \ones^T \dec_k = 1, \; \dec_k \geq 0, \notag
    %& \quad \dec_k \geq 0,
\end{align}
where $u_k \in \mathbb{R}^U$ is the portfolio allocation vector and $[u_{k}]_i$ represents the fraction of the capital that is invested in stock $i$ at timestep $k$. The mean vector $\mu_{x_k} \in \mathbb{R}^U$ and covariance matrix
$\Sigma_{x_k} \in \mathbb{R}^{U \times U}$ can be computed according to the state of the market conditions, $x_k$. The risk aversion parameter $\gamma$ quantifies the trade-off between the two terms of the expected value, which are how much return is expected \textit{versus} how risky the investment is.
The constraint $\ones^T \dec_k = 1$ implies that all the money has to be allocated, and $\dec_k \geq 0$ enforces that the investor is allowed to buy but not sell stocks.

Crucial to the success of any investor is how well it can estimate the current market conditions $x_k$. This estimate depends on the private information set $\mathcal{I}_k$ available to each investor. In our scenario, we assume that a rival investor (i.e., an adversary) has access to less (or worse) private information and aims to estimate the private belief of the investor solving \eqref{eq:mpt_optimization}.
In Section \ref{sec:adversary} it was shown how the rival can do this, and in the next section we propose a framework for how the investor can prevent it.

\section{Counter-adversarial decision-making}
\label{sec:confuse}

%Talk from a privacy point of view.

In this section, we propose a privacy-preserving decision maker that modifies its actions in order to obfuscate its private belief from the adversary. The new framework is shown in Figure \ref{fig:scheme2}.

\subsection{General setup}
\label{sec:general}

At each timestep, the ODM collects information from the world and performs an action (Section \ref{sec:us}). It asks itself: \textit{If I publicly announce the decision $\dec_k^*$, what is the set of beliefs $\Pi(u_k^*)$ consistent with my decision that the adversary can determine?} It was seen in Section \ref{sec:adversary} that this set includes the actual private belief, $\pi_k$, and, therefore, the privacy of the ODM can be compromised.

In this section, we propose a \textit{Counter-adversarial Decision Maker} (CDM), that uses the blue block in Figure \ref{fig:scheme2}, called an \textit{Obfuscator}, to conceal its private belief from the adversary. While the ODM performs the optimal action $u_k^*$ with cost $c_k^*$, the CDM performs a suboptimal action $\bar{u}_k$ with cost $\bar{c}_k$. %We consider the simplified case where $\bar{u}_k = u_k^* +\delta$, being $\delta$ a perturbation such that $u_k^* +\delta \in \conset$. 

%If the adversary can uniquely recreate our real belief $\pb_k$ ($\Pi_k(u_k) = \{\pi_k\}$), our privacy is compromised. 

\subsection{Problem formulation}
\label{sec:prob_form}

Performing a suboptimal action $\bar{u}_k $ entails an increase in cost ($\bar{c}_k \geq c_k^*$). Thus, protecting its privacy comprises a trade-off between how much the decision maker is able to obscure its private belief \textit{versus} how much it is willing to pay for doing so, due to performing a suboptimal action. This can be formulated as: %Our goal is to confuse the adversary as much as possible, while keeping this cost as small as possible. 

\begin{problem}[Obfuscating the private belief]
\label{pr:main}
How can an agent obfuscate its private belief $\pb_k$ from an adversary, while keeping its cost as small as possible?
\end{problem}

A straightforward approach for solving Problem~\ref{pr:main}, that explicitly reflects the trade-off, is the following: %the consists of choosing a suboptimal action that maximing the adversary's confusion, under the constraint of keeping its cost within a certain bound. Thus, the solution to the problem is posed as:
\begin{approach}[Direct way of obfuscating the private belief]
\label{ap:1}
The CDM addresses Problem \ref{pr:main} by solving the following optimization problem:
\begin{align}
    \max_{\bar{u}_k \in \decset} \quad & \Psi(\bar{u}_k)  \label{eq:ap1}  \\
    \text{s.t.}\quad & \bar{u}_k \in \conset, \notag \\
    & \mathbb{E}_{x_k} \big\{ \, \cost(x_k, \bar{u}_k) \big\} \leq c^*_k (1 + c_b).     \notag
\end{align}
\end{approach}
\noindent
The objective function $\Psi(\bar{u}_k)$ quantifies the decision maker's \textit{privacy level}. The last constraint corresponds to how much it is willing to pay, $\bar{c}_k$, to protect its private belief $\pi_k$, and the term $c_b \in \mathbb{R}^+$ is its \textit{obfuscation cost budget} -- the cost the agent allocates to obfuscating its private belief.

\subsection{Quantification of the level of privacy}
\label{sec:quantifying}

Measures to quantify attacks have, for a long time, been a central aspect in the fields of privacy and security. 
In this paper, we quantify it by means of evaluating the privacy level of the defender (i.e., the decision maker), according to one of the measures presented next.
Obfuscating the decision maker's private belief consists of making a suboptimal decision $\bar{u}_k$, such that the set of private beliefs reconstructed by the adversary, $\Pi(\bar{u}_k)$, from \eqref{eq:set_pi}, maximizes one of the following privacy measures:
\begin{enumerate}[label=\alph*)]

\item \textit{Infeasibility}
\begin{align}
    \Psi(\bar{u}_k) = 
    \begin{cases}
0, \quad & \text{if }  \Pi(\bar{u}_k) = \emptyset, \\
-\infty, & \text{otherwise}.
\end{cases} 
\end{align}
By optimizing this objective function, the agent takes a decision that makes the adversary's reconstruction infeasible. Although its privacy is preserved, the adversary might realise that the agent is obfuscating its belief.

\item \textit{Non-uniqueness feasibility}
\begin{align}
    \Psi(\bar{u}_k) = 
    \begin{cases}
0, \quad & \text{if } \exists z \in \Pi(\bar{u}_k) \text{ s.t. } z \neq \pi_k, \\
-\infty, & \text{otherwise}.
\end{cases} 
\end{align}
Under this criterion, the reconstructed set cannot be a unique point. However, even though the adversary is supposedly not able to \textit{uniquely} determine the agent's private belief, additional knowledge of the agent's model might allow it to more confidently identify the actual private belief (see \cite{mattila2019smoother}).

\item  \label{item:c} \textit{Non-existence feasibility}
\begin{align}
    \Psi(\bar{u}_k) = 
    \begin{cases}
0, \quad & \text{if }  \{\pi_k\} \cap \Pi(\bar{u}_k) = \emptyset, \\
-\infty, & \text{otherwise}.
\end{cases} 
\end{align}
This criterion states that the reconstructed set cannot include the actual private belief, $\pi_k \notin \Pi(\bar{u}_k)$. It can either be a single element different from the private belief, or a set of elements that does not include it.

\item \label{item:e} \textit{Desired obfuscation}
\begin{align}
    \Psi(\bar{u}_k) = - \text{dist}(\pi_k^d,\Pi(\bar{u}_k)).
\end{align}
In this case, the agent wants the adversary's set of beliefs to be as close as possible to a desired belief $\pi_k^d$. For example, in the portfolio optimization case, if the agent's private belief is that the market conditions are improving, it might want the adversary to believe they are declining.

%\item if it is a set, is expanded as much as possible. \\
%Pose this as a Remark maybe.

\item \label{item:d} \textit{Maximal obfuscation}
\begin{align}
    \Psi(\bar{u}_k) =  \text{dist}(\pi_k,\Pi(\bar{u}_k)).
\end{align}
This criterion states that the agent wants the reconstructed set to be as distant as possible from the actual private belief, which is a generalization of Criterion \ref{item:c}.

\end{enumerate}

\subsection{Computational tractability}  % concerns
\label{sec:main_result}

%Now that , we analyse the computational complexity that stems from applying them to \eqref{eq:ap1}.

In general, problem \eqref{eq:ap1} is tractable if $\Psi(\bar{u}_k)$ is a concave function. From the different measures of privacy that we presented in the previous section, it can be seen that requiring concavity is typically a strong and unrealistic condition to impose.
Motivated by this, in the next section we present a probabilistic relaxation of \eqref{eq:ap1} that relies only on sampling and linear programming.
%a probabilistic relaxation to transform this approach into a computationally tractable one.
%For example, consider the case of $\Psi(\bar{u}_k)$ being given by item \ref{item:e} of \ref{sec:quantifying}. Minimizing the distance to a certain belief intuitively corresponds to choosing a (non-convex) area around $\pi_k^d$ where the solution has to be. Similarly for item \ref{item:d}, where the problem becomes choosing an area around $\pi_k$ where the solution cannot be.

% the constraints are pretty complicated:
%\begin{align}
%\label{eq:non_convex_constraint}
% & y^T \ones = 1 , \; y \geq 0, \; \lambda_y \geq 0, \\
%          &  [\lambda_y]_i [\bar{u}_k]_i = 0 \quad \text{ for } i = 1, \dots, \MakeUppercase{\dec},\\
% & \sum_{i=1}^X [y]_i \nabla_\dec \cost(i, \bar{u}_k) - \lambda_y + A^T \nu_y = 0.
%\end{align}

%Solving this problem would be even more challenging if we consider more realistic risk assessment measurements such as the \textit{Conditional Value at Risk} (CVaR), that quantifies the amount of tail risk in an investment portfolio.

%{\color{blue} Prove that it is not convex for the presented $\Psi$s?}

%%%%%%%%%%%%%%%%%%%%%%%%%%%%%%%%%%%%%%%%%%%%%%%%%%%%%%%%%%%%%%%%%%%%%%%%%%%%%%%%

\section{Probabilistic framework}
\label{sec:probabilistic}

The privacy-preserving decision $\bar{u}_k$ performed by the CDM requires it to solve the generally intractable optimization problem \eqref{eq:ap1}. In this section, we propose a sampling-based probabilistic relaxation to yield a computationally feasible algorithm to obtain $\bar{u}_k$.

\subsection{Probabilistic relaxation of \textnormal{Approach \ref{ap:1}}}

%The approach we propose in this section to relax the previous problem is similar to that used in randomized actions in Markov decision processes \cite{puterman2014markov}.
%
In the context of game theory, we say that the CDM in Section \ref{sec:prob_form} follows a \textit{pure strategy}, in the sense that it solves a deterministic decision problem. Since this strategy could be uncovered by the adversary, a way to obfuscate is to use a \textit{mixed strategy} instead -- to introduce a chance element in the agent's decision process. This idea is similar to using randomized actions in Markov decision processes \cite{puterman2014markov}.

Motivated by this, in this section we propose a \textit{Probabilistic counter-adversarial Decision Maker} (PDM), that uses a mixed strategy for the \textit{obfuscator} block. The randomization involves assigning a probability to each possible action, $\bar{u}_k$, which is now a sample from a distribution. 
The objective of the PDM is to maximize its \textit{expected} privacy level, while limiting its \textit{expected} cost. Unlike the CDM, that performs actions that obfuscate its private belief from the adversary at every timestep, the PDM's actions obfuscate its private belief \textit{on average}.

\begin{approach}[Obfuscation of the private belief on average]
\label{ap:2}
The PDM's approach to Problem \ref{pr:main} consists of sampling its actions from the distribution $\bar{u}_k \sim p_{\bar{u}_k}(\cdot)$ over $\mathbb{R}^U$:
\begin{align}
            \max_{p_{\bar{u}_k}}  \quad    & \mathbb{E}_{\bar{u}_k \sim p_{{\bar{u}_k}} } \big\{ \Psi(\bar{u}_k) \big\} \label{eq:ap2} \\
            \text{s.t.}\quad & \textnormal{support}(p_{{\bar{u}_k}}) \subseteq \conset, \notag \\
            & \mathbb{E}_{\bar{u}_k \sim p_{{\bar{u}_k}} } \big\{ \mathbb{E}_{x_k \sim \pi_k } \big\{ \cost(x_k, \bar{u}_k) \big\}  \big\}  \leq c^*_k( 1 + c_b).  \notag
\end{align}
\end{approach}
%\noindent
%The constraint $\bar{u}_k \in \conset$ becomes $\text{support}(p_{{\bar{u}_k}}) \subseteq \conset$.

\subsection{Computationally feasible formulation}

The optimization problem \eqref{eq:ap2} is still intractable since the PDM 
optimizes over the infinite space of probability distributions in $\mathbb{R}^U$. To make it tractable, we use the Monte Carlo integration technique \cite{davis2007methods}. This technique consists of performing numerical integration using random samples, which in our case gives rise to the following assumption: %, means assuming a specific form of the distribution and maximizing over the elements of the distribution. 

\begin{assumption}
The distribution $p_{\bar{u}_k}(\cdot)$ is assumed to be of the form:
\begin{align}
p_{\bar{u}_k}(\bar{u}_k) = \sum_{l=1}^M  [p]_l \delta (\bar{u}_k-\bar{u}_k^{(l)} ),           
\label{eq:assumption1}
\end{align}
where $\delta$ is the Dirac delta, M is the number of samples, $p \in \mathbb{R}^M$ is a probability mass vector (i.e., $[p]_l \geq 0$ and $\sum_{l=1}^M [p]_l = 1$), and $\{\bar{u}_k^{(l)}\}_{l=1}^M$ is a set in $\mathcal{C}$.
%This assumption corresponds to a sampling-based relaxation that tends to the real value as $M$ goes to infinity. %{\color{blue} Give more details about this assumption.}
\label{ass:distr}
\end{assumption}

Under Assumption \ref{ass:distr}, the optimization problem \eqref{eq:ap2} reduces to the relaxed problem solved by the PDM:
  \begin{align}
            \max_{ p \in \mathbb{R}^M }  \quad  & \sum_{l=1}^M \; [p]_l \Psi(\bar{u}_k^{(l)})  \label{eq:assump} \\
            \text{s.t.}\quad& [p]_l = 0 \text{ if } \bar{u}_k^{(l)} \notin \conset,  \notag  \\
            & \sum_{l=1}^M \; [p]_l \; \big\{ \sum_{i=1}^X \pi_i \cost(i, \bar{u}_k^{(l)}) \big\} \leq c^*(1 + c_b), \notag  \\
            & [p]_l \geq 0, \quad l = 1, \dots, M, \notag  \\
            & \sum_{l=1}^M [p]_l = 1,  \notag
  \end{align}
which is a finite-dimensional linear program and, therefore, computationally efficient to solve using existing solvers. Under appropriate assumptions on how the points $\{ \bar{u}_k^{(l)} \}_{l=1}^M$ are selected, the solution of \eqref{eq:assump} will tend to that of \eqref{eq:ap2} as $M$ tends to infinity. 
%  
%  \begin{align}
%            \max_{ \{p_i\}_{i=1}^N }  \quad  & \sum_{i=1}^N \; p_i \; \big\{ \min_{y \in \Pi(\bar{u}_i) }  \left\lVert \pi -y \right\rVert  \big\}   \notag \\
%            \text{s.t.}\quad& p_i = 0 \text{ if } \bar{u}_i \notin \conset, \\
%            & \sum_{i=1}^N \; p_i \; \big\{ \sum_{j=1}^X \pi_j \cost(j, \bar{u}_i) \big\} \leq c^* + b, \\
%            & p_i \geq 0, \\
%            & \sum_{i=1}^N p_i = 1.
%            \label{eq:confuse_2}
%  \end{align}
%
%\begin{align}
%            \max_{\bar{u}_k^p} \quad    & \text{dist}( \pb_k, \Pb_k(\bar{u}_k^p )) \notag \\
%            \text{s.t.}\quad& {\bar{u}_k^p} \in \conset, \\
%            & \mathbb{E} \big\{ \sum_{i=1}^X [\pb_k]_i \cost(i, \bar{u}_k^p) \big\}  \leq c^* + b.     
%  \end{align}

%%%%%%%%%%%%%%%%%%%%%%%%%%%%%%%%%%%%%%%%%%%%%%%%%%%%%%%%%%%%%%%%%%%%%%%%%%%%%%%%

 \begin{figure}[t]
	\centering
	\includegraphics[width=0.4\textwidth]{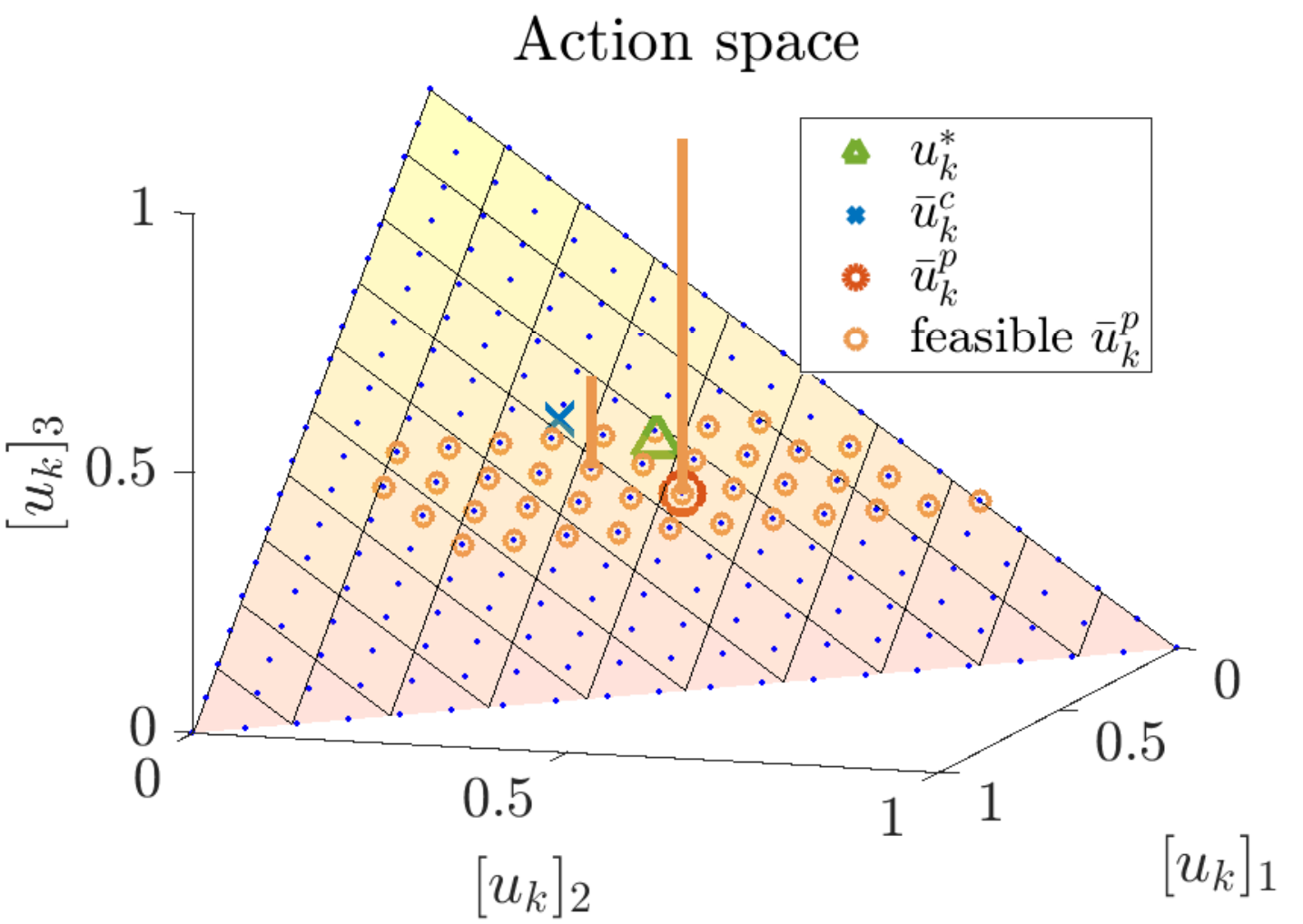}
	% The ref comes from addplot \label{ }
	\caption{Action space at a certain timestep, showing the actions (\greentriag), (\bluecross), (\orangedot) chosen by the different decision makers. %The dots (\bluesmall) represent the evaluated actions. 
For the PDM, the set $\{\bar{u}_k^{(l)}\}_{l=1}^M$ (defined in \eqref{eq:assumption1}) is shown in small circles (\orangesmall), and the probability mass vector $p$ resulting from solving \eqref{eq:assump} is represented by the bars (\orangeline). In this case, two actions had a positive probability of being chosen.} 
	\label{fig:actions}
\end{figure}
\begin{figure}[t!]
	\centering
	\includegraphics[width=0.4\textwidth]{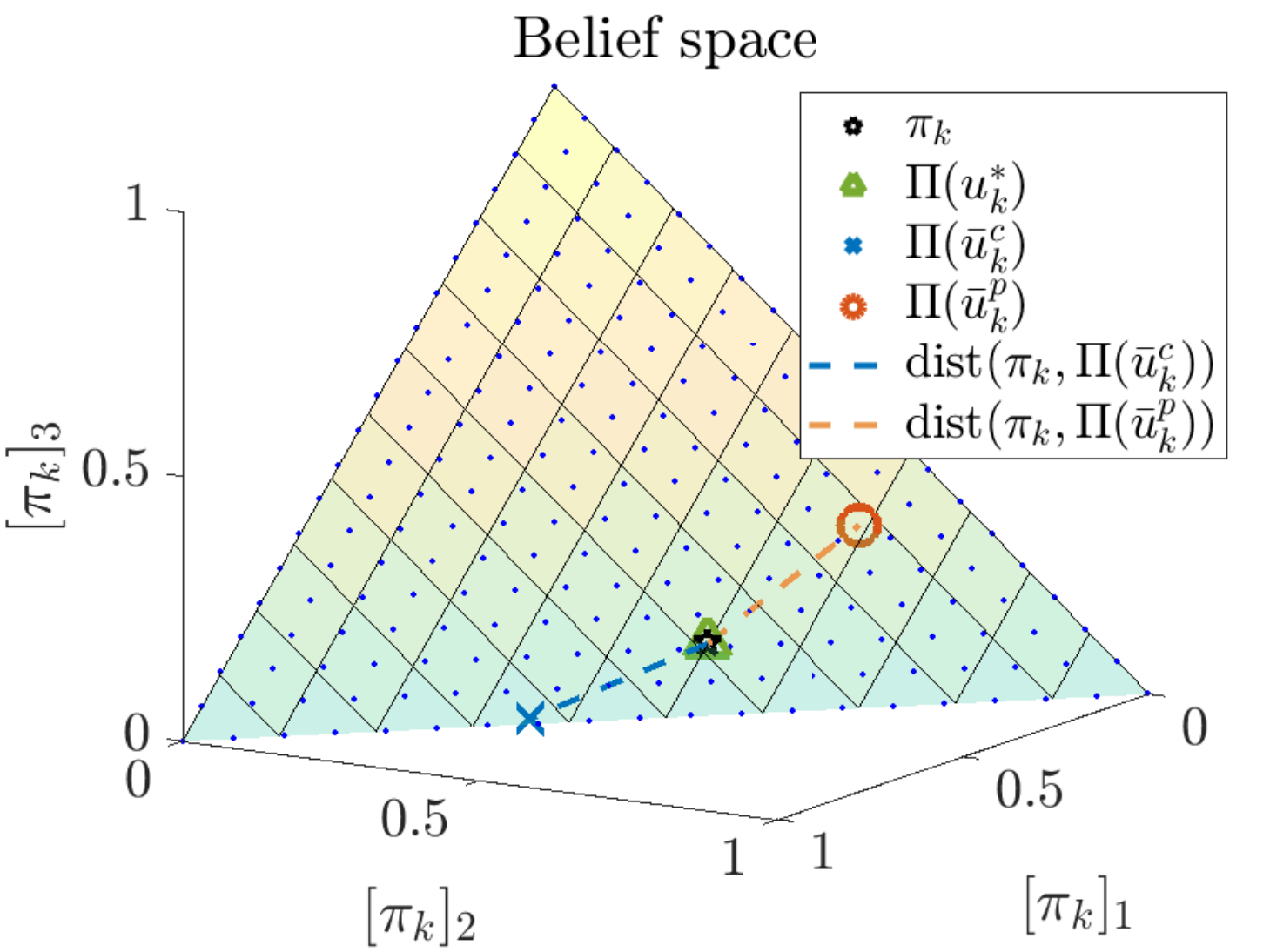}
	\caption{Belief space at a certain timestep. The actual private belief $\pb_k$ is marked in black (\blackstar). The green (\greentriag), blue  (\bluecross) and orange (\orangedot) points are the set of beliefs reconstructed by the adversary, consistent with the ODM, CDM and PDM's chosen action, respectively. In this example, all the sets have a single element, but this is not always the case. The dashed lines measure the privacy level obtained by each agent.}
	\label{fig:beliefs}
\end{figure}

\section{Numerical example}
\label{sec:results}

We now evaluate and visualize our theoretical framework in the adversarial portfolio allocation setup presented in Section \ref{sec:portfolio_allocation}. We first clarify implementation details and then compare the performance of each of the decision makers presented, by analysing their actions, costs, and the set of beliefs that the adversary can reconstruct.

\subsection{Implementation details and conventions}

%So far we have derived the results for a general confusion condition $\Psi(\bar{u}_k)$. To perform the numerical simulations we will choose it to be the condition of maximal confusion, corresponding to item \ref{item:d} of \ref{sec:quantifying}, but a similar analysis could be done regarding any of them. 
In order to be able to visualize the results, we randomly generate a three-regime portfolio allocation scenario, where $U=X=3$. This means that both action and belief spaces are represented by two-dimensional unit simplices.

The three other parameters in \eqref{eq:ap1} and \eqref{eq:assump} -- namely, $c_b$, $\Psi(\bar{u}_k)$ and the set $\{ \bar{u}_k^{(l)} \}_{l=1}^M$ -- were chosen as follows. The cost budget was chosen to be $c_b=0.1$, which corresponds to the agent allocating 10\% extra of the optimal cost to obfuscate its private belief. %As for the latter, the equations were derived for a general privacy measure, but for the numerical examples one of the presented criteria needs to be chosen. However, it should be noticed that a similar analysis is possible for any of them. 
The privacy measure $\Psi(\bar{u}_k)$ was chosen to be the measure of \textit{maximal obfuscation}, corresponding to item \ref{item:d} of Section \ref{sec:quantifying}.
Since the set $\Pi(\bar{u}_k)$ is described by convex inequalities, the term $\Psi(\bar{u}_k)$ thus becomes, by definition:
\begin{align}
    \Psi(\bar{u}_k) =  \text{dist}(\pi_k,\Pi(\bar{u}_k)) = \min_{y \in \Pi(\bar{u}_k)}  \left\lVert \pi_k - y \right\rVert_2.
    \label{eq:maxconfuse}
\end{align}       
Finally, the set $\{ \bar{u}_k^{(l)} \}_{l=1}^M$ was generated by computing a regular grid over the simplex -- in the future, more sophisticated sampling could be considered.

Throughout this section, we use the following convention in the figures:
\begin{itemize}

\item The results obtained by the \textit{Original Decision Maker} (ODM), that solves problem \eqref{eq:forward} from Section \ref{sec:preliminaries}, are labeled with (\greentriag); 
\item The results obtained by the \textit{Counter-adversarial Decision Maker} (CDM), that solves problem \eqref{eq:ap1} from Section \ref{sec:confuse}, are labeled with (\bluecross);
\item The results obtained by the \textit{Probabilistic counter-adversarial Decision Maker} (PDM), that solves problem \eqref{eq:assump} from Section \ref{sec:probabilistic}, are labeled with (\orangedot).

\end{itemize}

%We aim to benchmark the results of our two proposed approaches, which means those of the P with those of the NC approach.

\subsection{Actions selected by the different decision makers}

Figure \ref{fig:actions} shows the actions chosen by each of the three different decision makers at a particular timestep $k$. The ODM selects action $u_k^*$, which is the optimal action performed if there is no adversary. The CDM, that is employing the generally intractable approach, selects $\bar{u}^{c}_k$. Finally, the PDM takes a random action between those marked as (\orangesmall), where each has a probability given by the vector $p$, here represented as the bar on top of each action. The randomly chosen action is denoted as $\bar{u}^{p}_k$.

%\begin{figure*}[t]
%	\centering
%	\includegraphics[width=0.8\textwidth]{}
%	\caption{Cost difference between $\bar{c}_k$ and $c^*_k$, at each timestep. On average, the cost obtained with the NC approach is lower.}
%	\label{fig:costsdistances}
%\end{figure*}

\subsection{The adversary's belief estimates}

The sets of beliefs that the adversary can reconstruct from each of the agent's actions at this timestep (as described in Section \ref{sec:adversary}) are illustrated in Figure \ref{fig:beliefs}. The actual private belief $\pi_k$ of the decision makers is shown in black. The privacy of the ODM is compromised, since its private belief belongs to the set of beliefs reconstructed by the adversary ($\pi_k \in \Pi(u_k^*) \Leftrightarrow dist(\pi_k, \Pi(u_k^*)) = 0$).
On the other hand, the suboptimal actions performed by both the CDM and the PDC have allowed them to successfully obfuscate their private belief ($\pi_k \notin \Pi(\bar{u}_k^c)$ and $\pi_k \notin \Pi(\bar{u}_k^p)$).
Nevertheless, according to the \textit{maximal obfuscation} criterion chosen, the level of privacy depends on the distance between $\pi_k$ and $\Pi(\bar{u}_k)$, shown in dashed lines and computed by \eqref{eq:maxconfuse}. 

Figure \ref{fig:distances2} summarizes these distances over multiple timesteps. The PDM is, on average, better at obfuscating its private belief than the CDM. This is due to the latter having the possibility of obtaining a local minimum when solving optimization problem \eqref{eq:ap1}. Moreover, its budget constraint is hard (compared the PDM, which only needs to satisfy the budget \emph{on average)}. Finally, note that the ODM's privacy is always compromised (the distance is zero).%, since the adversary can always pinpoint its private belief.

\begin{figure}[t]
	\centering
	\includegraphics[width=0.4\textwidth]{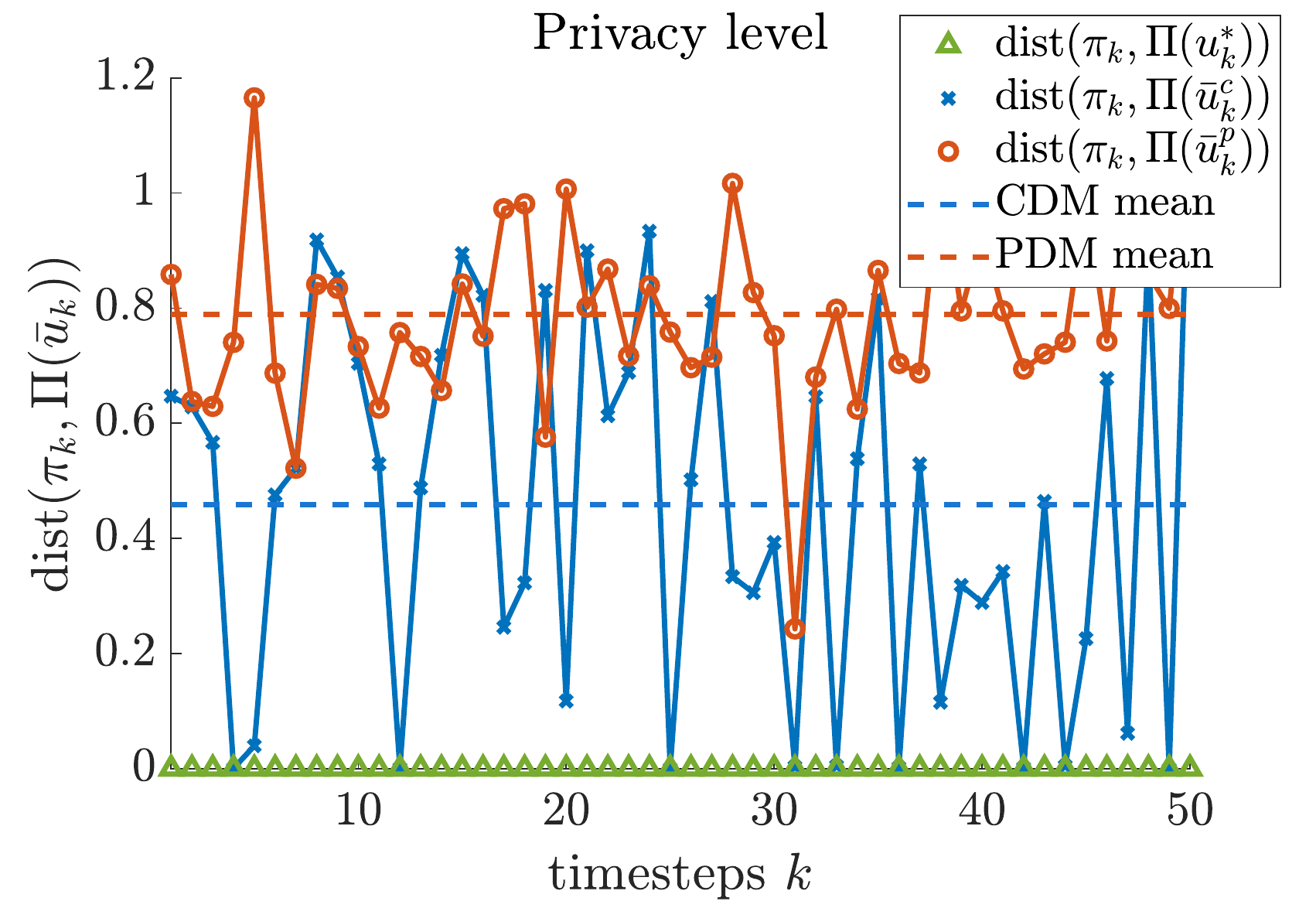}
	\caption{Privacy level of the decision makers at multiple timesteps, measured by the distance between their actual private belief $\pi_k$, and the set of beliefs $\Pi(\bar{u}_k)$ reconstructed by the adversary. On average, the PDM is better at obfuscating its private belief than the CDM, while the ODM's privacy is always compromised.}
	\label{fig:distances2}
\end{figure}

%\begin{figure*}[h!]
%    \centering
%    \vspace{0.15cm}
%    \begin{subfigure}[b]{0.3\textwidth}
%	\includegraphics[width=\textwidth]{Figures/actions.png}
%    \end{subfigure}
%    %
%    %\hspace{2cm}
%    \hspace{0.3cm} \unskip\ \vrule\ \hspace{0.3cm}
%    \begin{subfigure}[b]{0.3\textwidth}
%	\includegraphics[width=\textwidth]{}
%    \end{subfigure}
%    %
%    \hspace{0.2cm}
%    \begin{subfigure}[b]{0.3\textwidth}
%	\includegraphics[width=\textwidth]{}
%    \end{subfigure}
%    %
%    \caption{The actual private belief $\pb_k$ is marked with a black circle. The red region corresponds to $\Pb_k(u_k^*)$ -- the set of beliefs consistent the optimal action (defined in \eqref{eq:set_pi}). The yellow diamond corresponds to the belief consistent with the perturbed action, $\Pi_k(\bar{u}_k)$. }
%    \label{fig:plot_of_Pi}
%    \vspace{-0.05cm}
%\end{figure*}

\begin{figure}[t]
	\centering
	\includegraphics[width=0.4\textwidth]{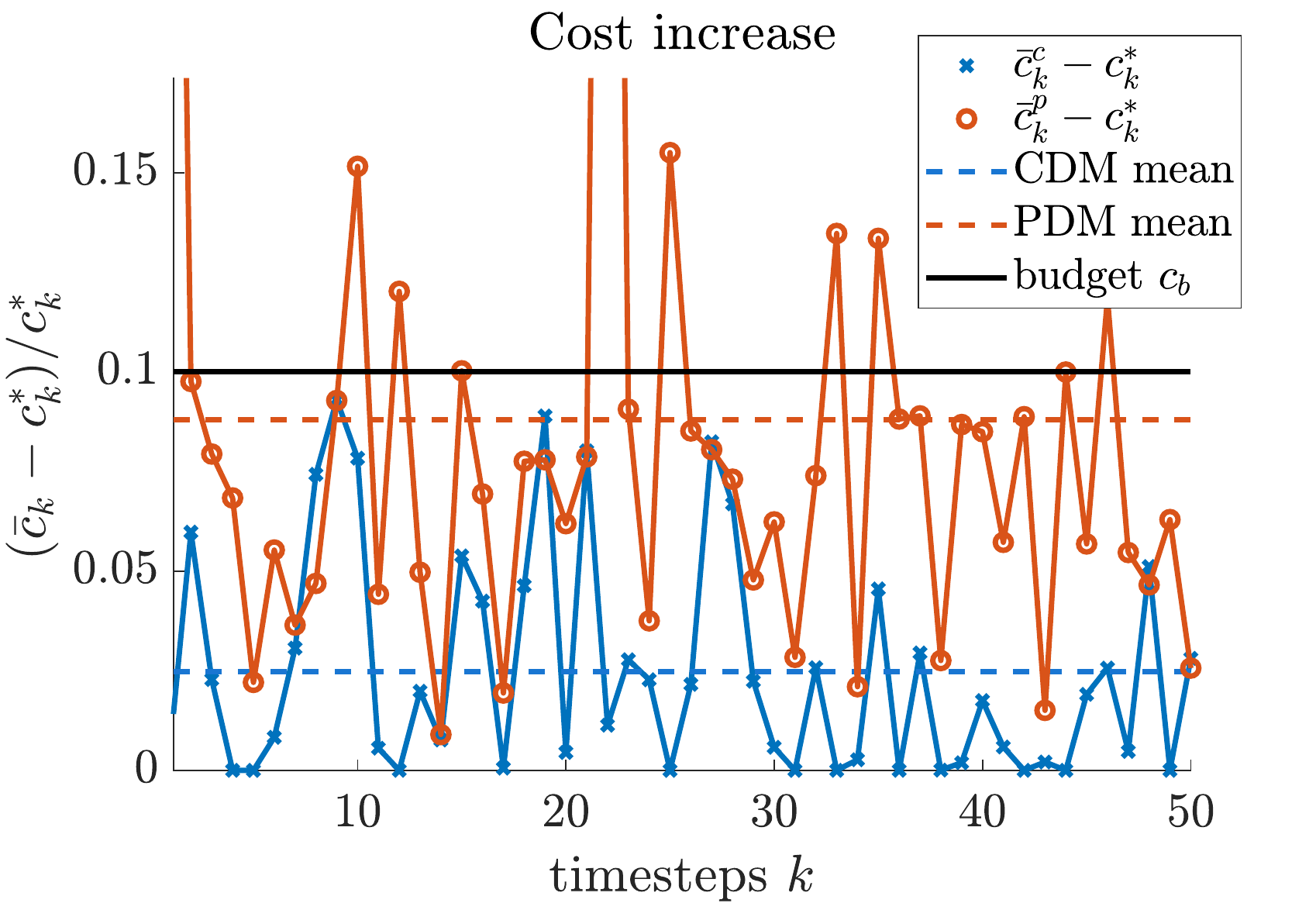}
	\caption{Normalised cost increase for the counter-adversarial decision makers, given by the difference between $\bar{c}_k$ and $c^*_k$, at each timestep. On average, the CDM incurs in a lower cost than the PDM's.}
	\label{fig:costs}
\end{figure}

\subsection{Cost increase for different decision makers }

Above, we saw qualitatively (in Figure \ref{fig:distances2}) that the counter-adversarial decision makers can obfuscate their private belief. We now quantify this and study their increase in cost accrued due to taking the suboptimal action.

%We now show that \textit{decision makers b)} are able to confuse the adversary while not increasing their cost more than desired. %For the timestep above, performing the optimal action $u^*$ had the cost $c^*=2$ but our privacy would have been compromised. However, to keep it safe we chose to perform instead the action $\bar{u}^{nc}$, which had the cost $\bar{c}^{nc} = 4$, or the action $\bar{u}^p$, with cost $\bar{c}^p = 4.1$. As expected, confusing the adversary implies an increase of our cost, but 
Recall that to obfuscate their private belief, the agents allocate a cost budget $c_b$. In Figure \ref{fig:costs}, we show the normalised difference between the costs $\bar{c}_k$ incurred by the two counter-adversarial decision makers, CDM and PDM, and the optimal cost $c_k^*$ incurred by the ODM, over multiple timesteps. It can be seen that the CDM incurs, on average, on a lower cost than the PDM. It should also be noted that while the CDM never violates the cost budget constraint, the PDM fulfills the requirement of not violating it on average.

Figures \ref{fig:distances2} and \ref{fig:costs} show the trade-off between preserving the agent's privacy and limiting its cost. The more the agent aims to protect its privacy, the more it needs to perform an action further from the optimal one. Therefore, as expected, for either of the counter-adversarial approaches, a higher privacy level entails a higher cost, and vice-versa. The fact that the PDM's cost is, on average, higher than the CDM's, is thus explained by the fact that its privacy is, on average, higher. %For example, when in Figure \ref{fig:costs} the cost of a certain action is the same as the optimal ($\bar{c}_k-c^*_k=0)$, the action taken was actually the optimal and therefore the adversary is not confused ($\Pi(\bar{u}_k) = \Pi(u^*_k) \Leftrightarrow \pi \in \Pi(\bar{u}_k) \Leftrightarrow \text{dist}(\pi_k -\Pi(\bar{u})) =0$).

\subsection{Influence of the cost budget on the level of privacy }

Recall that the previous results were obtained for a fixed cost budget $c_b$.
As a final result, we show in Figure \ref{fig:tradeoff} how the privacy level of the agents varies with the budget they allocate to obfuscate their private belief. At a certain budget, the privacy level saturates since the furthest action away from the optimal is already selected. We can also see that the PDM saturates at a higher cost and has, on average, a higher privacy than the CDM for any cost budget.

\begin{figure}[t]
	\centering
	\includegraphics[width=0.4\textwidth]{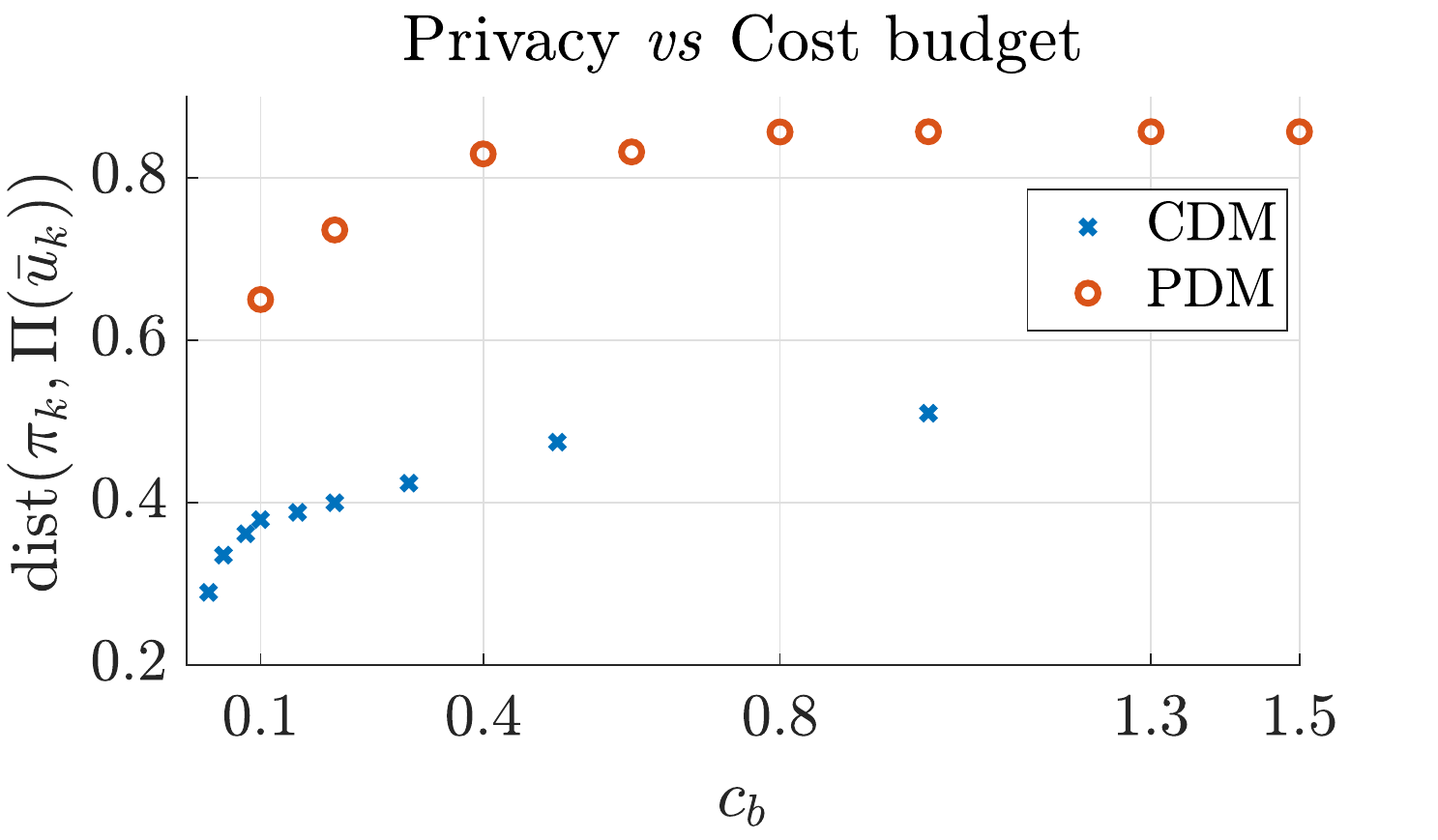}
	% The ref comes from addplot \label{ }
	\caption{Trade-off between level of privacy and obfuscation cost budget, according to the privacy criterion chosen. The privacy level of the agents increases with their cost budget, until saturation. %, the increase in distance between the real belief and the set of beliefs reconstructed by the adversary, as the cost budget increases, corresponds to a better obfuscation of the real belief. 
The plot shows an average over 20 simulations for each cost budget, under certain market conditions.}
	\label{fig:tradeoff}
\end{figure}

%cbar < c*(1+cb)
%cbar < c*+ c*cb
%cbar- c* < c*cb
%%%%%%%%%%%%%%%%%%%%%%%%%%%%%%%%%%%%%%%%%%%%%%%%%%%%%%%%%%%%%%%%%%%%%%%%%%%%%%%%

\section{Conclusions}
\label{sec:conclusions}
In this paper, we studied counter-adversarial decision-making. It has recently been shown that an adversary can reconstruct data private to a decision-making agent by observing its actions. We proposed a framework for trading privacy against cost-optimality: by performing a suboptimal action, the decision maker can conceal its private belief. Several measures to quantify this trade-off were discussed. The solution to the trade-off resulted in an intractable optimization problem, for which we derived a probabilistic relaxation that relies only on linear-programming. The framework and the proposed methods were evaluated in numerical experiments with promising results -- the decision maker was able to obfuscate its private belief from the adversary, while respecting the cost budget allocated to protecting its privacy. 

%In the context of counter-adversarial sequential stochastic decision making, it has recently been shown that an adversary can, by observing the decision maker's actions, estimate data that is private to that agent, in particular, its posterior distribution over the states of the world, denoted as private beliefs.
%In this paper, we discussed different criteria for confusing this adversary by obfuscating the agent's private beliefs, and presented an approach for doing so. To overcome this approach's limitations, a probabilistic sampling-based relaxation was proposed.
%Numerical experiments were conducted to validate the success of these approaches for a certain chosen criteria, and it was concluded that, by performing suboptimal actions, the decision maker is able to confuse the adversary while satisfying a certain confusion cost budget.

\subsection{Future Work}
Future work involves relaxing the assumption that the adversary knows the decision maker's cost function to the case where it has to estimate it, as well as investigating problem-specific ways of generating the samples for problem \eqref{eq:assump}. Further, it was shown in \cite{mattila2019smoother} that, if the adversary has knowledge of how the decision maker is updating its private belief, it is possible to infer a full Bayesian posterior. It would be interesting to incorporate counter-adversarial decision-making to this setting.
\balance
\bibliography{confusing-adversary.bib}

\end{document}